\documentclass[journal]{IEEEtran}
\ifCLASSINFOpdf
\else
\fi
	\usepackage{graphicx}
	\usepackage{epstopdf}
	
	\usepackage{amsthm}
	\usepackage{amsmath}
	\usepackage{amsfonts}
	\usepackage{cases}
	\usepackage{amssymb}
	\usepackage{yhmath}
	\usepackage{cite}
	\usepackage{epstopdf}
	\usepackage{booktabs}
	\usepackage[ruled]{algorithm2e}
	\usepackage{subfigure} 
	\usepackage{bm}
	\usepackage{svg}
\begin{document}
		\theoremstyle{definition}
		
		\newtheorem{proposition}{\indent Proposition}
		\newtheorem{remark}{\indent Remark}
		\newtheorem{lemma}{\indent Lemma}
		
	\title{\Large{Holographic Airy Beamforming: Curved Trajectory Optimization for Blockage-Resilient Terahertz Communications}\\
		{
		}

		\author{\IEEEauthorblockN{Xinyuan~Hu$^{\ast}$,
				Boya~Di$^{\ast}$,	and Lingyang~Song$^{\ast}$}\\
			
			\IEEEauthorblockA{$^{\ast}$School of Electronics, Peking University, Beijing, China.\\
				}
			
			Email: $^{\ast}$\{huxiny, diboya, lingyang.song\}@pku.edu.cn
			
		}
	}
	
		

		
		
		\maketitle
		
		\begin{abstract}
			Terahertz communication offers vast bandwidth for high-speed transmission in the 6G networks but faces severe blockage challenges in the near-field region due to large antenna arrays. To overcome the limitation that near-field focused beams are susceptible to obstacles, wavefront engineering is leveraged to generate an Airy beam that propagates along a parabolic trajectory to circumvent blockages. In this paper, we consider the reconfigurable holographic surface (RHS) as a potential  solution for such precise  wavefront engineering  owing to its compact radiation element spacing being much smaller than half-wavelength. We reveal that the adjustable effective aperture of the RHS allows the parabolic offset to be located within the antenna aperture, which enhances the freedom in designing Airy beam trajectories. An analog beamforming method, named the holographic Airy beamforming scheme based on amplitude control, is then proposed to generate the curved beam that propagates along the desired trajectory. To maximize the received power of a blocked user, we develop a geometry-based trajectory optimization algorithm. Simulation results validate that, compared to traditional phase-controlled arrays with analog beamforming, the RHS can leverage its adjustable effective aperture to improve the received power of the blocked user by over 10 dB.

		\end{abstract}
		
		\begin{IEEEkeywords}
			Reconfigurable holographic surface, Airy beams, Curved beams
		\end{IEEEkeywords}


		%
		\IEEEpeerreviewmaketitle

		\section{Introduction}
	
	By exploiting the vast bandwidth in the 0.1–10 THz spectrum, terahertz (THz) communication has become a key technology for next-generation wireless systems, enabling high-speed transmission~\cite{ref.Thz_commun}. To compensate for the severe path loss, it necessitates the use of extremely large-scale antenna arrays (ELAAs) to form directional beams. The ELAAs, combined with the use of high-frequency bands, results in the extension of the near-field region, thereby placing more users and obstacles within it. In this context, traditional near-field beam focusing, which concentrates energy into a narrow spot, is highly susceptible to blockages due to the poor penetration and strong absorption of THz signals~\cite{ref.near_field}. Consequently, THz communication face  challenges in ensuring high signal quality with low overhead in blockage environments.
	
	To address this challenge, wavefront engineering, which refers to the precise shaping of the electromagnetic (EM) field at the transmitter, emerges as a new paradigm for navigating complex environments. This approach enables the generation of non-diffracting (ND) beams with inherent self-healing properties, allowing wavefront reconstruction after encountering obstacles to enhance link robustness. However, practical constraints such as limited antenna aperture and finite energy make ideal ND beams difficult to realize. As a result, for ND beams propagating along straight trajectories, like Bessel beams, both self-healing capability and obstacle resistance become limited when blocked~\cite{ref.nature_bessel}. In comparison, Airy beams, as another type of ND beams, exhibit self-acceleration, enabling them to propagate along pre-designed \emph{parabolic trajectories} without external potentials. This capability allows them to detour around obstacles in the line-of-sight (LoS) path, establishing reliable communication links and presenting a promising strategy for robust wireless connectivity in obstructed environments without auxiliary relays~\cite{ref.airybeam_basic}.
	
	Recent research on the bending curves of Airy beams has been actively conducted in various fields of wireless communications. The work~\cite{ref.airy_learning_NC} has employed a pre-trained neural network to generate optimized near-field Airy beams capable of supporting GHz-level bandwidth. In~\cite{ref.airy_codebook} An attention-based multi-parameter beam training network has been proposed to jointly predict the angle, distance, and curvature parameters of the optimal Airy beam in the airy codeword. The work~\cite{ref.airy_qinyf}  has demonstrated that geometry-based Airy beamforming can effectively deliver power into shadowed areas and restore multi-user channel rank, offering a solution for blockage mitigation in near-field communications. 
	


	However, existing research generates Airy beams using traditional phased arrays, whose element spacing is typically half a wavelength, thereby constraining the precision of wavefront engineering. In contrast, the reconfigurable holographic surface (RHS) is an energy-efficient and ultra-thin metamaterial transceiver capable of achieving sub-wavelength element spacing, which enables higher-resolution wavefront engineering~\cite{ref.zsp_movable_antenna}. Moreover, we reveal that the RHS features an adjustable effective aperture. By selectively activating elements via amplitude control, the beam’s initial offset can be shifted within the physical aperture. This provides an additional degree of freedom to optimize the curved trajectory relative to the obstacle's geometry.
	
	
	In this paper, we  investigate  an RHS-enable downlink communication system where the RHS-equipped base station transmit signal to a user blocked by an obstacle. To avoid signal blockage, we aim to generate a curved beam that can bypass the obstacle. This task is challenging because Airy profiles are conventionally generated through cubic phase modulation, whereas RHS operates on a holographic amplitude-control principle.
	To address this challenge, we contribute to RHS-enabled wireless communications as follows. \emph{First}, we propose a holographic Airy beamforming scheme based on amplitude modulation, which generates a curved beam that propagates along a predefined trajectory. \emph{Second}, leveraging the RHS’s adjustable effective aperture, we demonstrate that optimizing the beam's starting position and curvature jointly provides superior bypass capabilities compared to fixed-aperture phase-controlled arrays. Simulation results confirm that this tunability allows the RHS to provide the user with enhanced communication performance in complex blockage scenarios.
	
	
		\section{System Model}

	In this section, we first introduce the transmission scenario and the working principle of the RHS. Following this, the principles of holographic Airy beamforming is elaborated and the EM wave propagation model is presented.
		\subsection{Scenario Description and Working Principle of RHS}
		
		As shown in Fig.~\ref{fig.system_model}, we consider an RHS-enabled wireless communication system, where a base station (BS) equipped with an RHS transmits signals to a single-antenna user, with an obstacle present in the LoS link between the array and the user.
		Specifically, within a two-dimensional Cartesian coordinate system $(x, z)$,  the user is located at $(x_r,z_r)$. The obstacle, which completely blocks the direct  LoS path from the user to the RHS, is modeled as a rectangle with vertex $(x_{obs},z_{obs})$, length $l_{z,obs}$, and width $l_{x,obs}$.
		
		We consider a one-dimensional RHS with an aperture of $[0,l_{RHS}]$ positioned along the $x$-axis. The RHS comprises a feed, a matched load, and $N$ radiation elements embedded in waveguides with a uniform spacing of $d$. In operation, the feed injects an RF signal that has a power of $P_t$, serves as the reference wave, propagates along the waveguide, and sequentially excites each radiation element. At element $n$, tunable components regulate the proportion of energy coupled from the reference wave, i.e., the radiation ratio $\eta_n$ (ideally continuously tunable between 0 and 1), thereby controlling the radiated amplitude. The radiation phase is inherently determined by the propagation phase delay of the reference wave from the feed to the $n$-th element. The residual energy not coupled at the $n$-th element propagates forward to excite the ($n+1$)-th element and is ultimately absorbed by the matched load. Based on the working principle described above, the EM wave radiated by the $n$-th element can be given by
		\begin{equation}
			\label{eq.rad_element}
			M_{n} = \sqrt{\prod\nolimits_{k=1}^{n-1}(1 - \eta_{k})\cdot \eta_{n}}\cdot \sqrt{P_t}\cdot e^{-j\mathbf{k}_s\cdot\mathbf{r}_{n}^{s}},
		\end{equation}
		where $\mathbf{r}_{n}^{s}$ is the position vector from the feed to the $n$-th element, $\mathbf{k}_s$	is the propagation vector of the reference wave. Therefore, by shaping the radiation ratio distribution via 
		$\{\eta_n\}$, the RHS achieves precise wavefront engineering for beamforming.

	Under the theoretical condition that the radiation ratio $\eta_{n}$	can be continuously tunable within $[0, 1]$, the radiation amplitude of the $n$-th element, given by $\sqrt{\prod\nolimits_{k=1}^{n-1}(1 - \eta_{k})\cdot \eta_{n}}$, is equavlent to $\sqrt{\eta_{eq}}\cdot m_n\cdot s_n$ (see Appendix~A for proof). Here, $s_{n} \in\{0,1\}$ is the radiating activation state of element $n$, $m_{n}\in[0,1]$ is the tunable amplitude for active elements. The equivalent radiation ratio $\eta_{eq}$ is defined as the ratio of the power received by the element to the total power of the reference wave, and it satisfies $\sum\eta_{eq}m_n^2s_n \le 1$. Thus,  the EM wave radiated by the $n$-th element can be rewritten as
		\begin{equation}\label{eq.RHS_model}
		M_{n} = \sqrt{\eta_{eq}}\cdot m_n\cdot s_n\cdot \sqrt{P_t}\cdot e^{-j\mathbf{k}_s\cdot \mathbf{r}_{n}^{s}}.
	\end{equation}

	\subsection{Principles of Holographic Airy Beamforming}
	\begin{figure}[t]
     \setlength{\abovecaptionskip}{-4pt}
		\centering
		\includegraphics[width=0.7\columnwidth]{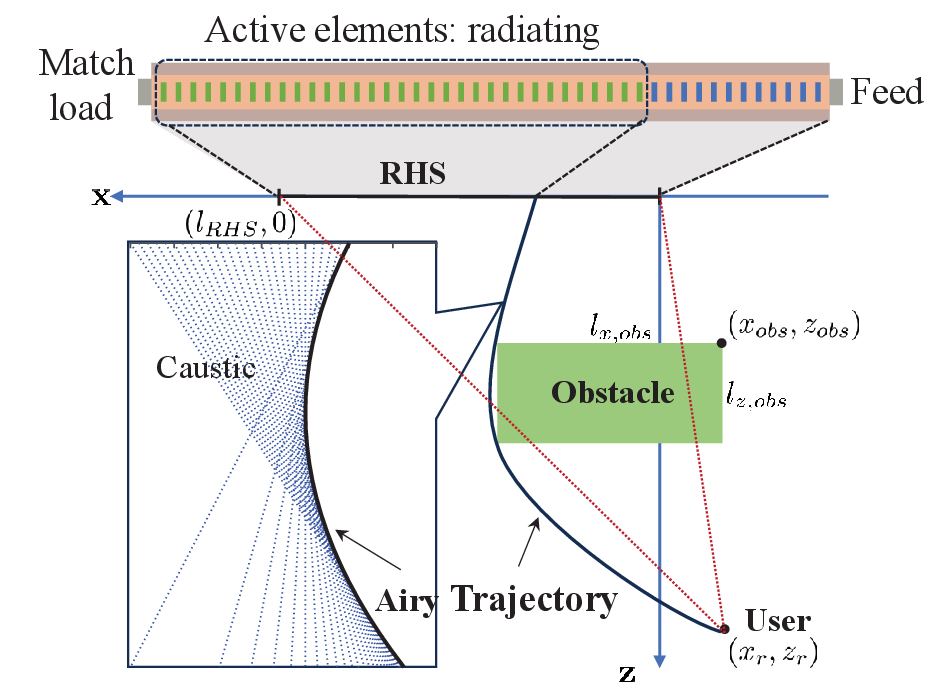}
		\caption{RHS-enabled downlink communication system in an obstructed environment}
		\label{fig.system_model}
		\vspace*{-10pt}
	\end{figure}

	To circumvent obstacles in the LoS link between the RHS and the user, we consider  generating a curved beam that propagates along a parabolic trajectory, drawing on the bending property of Airy beams. A method to generate such a beam, bending along a tunable parabolic trajectory   $x = f(z) = az^2 + bz + c$, is introduced in~\cite{ref.bending_beam}. This method treats  the antenna aperture as a continuous array comprising an infinite number of infinitesimal sub-arrays. Each sub-array emits a ray, and the envelope of all such rays forms a caustic, which defines the desired parabolic trajectory. 
	To synthesize the caustic corresponding to the trajectory $x=f(z)$, a phase profile $\phi(x)$ is designed across the aperture, which should satisfy the condition
	\begin{equation}
			\vspace*{-2pt}
		\label{eq.function_need_to solve}
		\frac{d \phi(x)}{dx}  = k_f\frac{df(z)/dz}{\sqrt{1 + (df(z)/dz)^2}},
	\end{equation}
	where $k_f$ is the free-space wave number.
	
	\begin{proposition}
		\label{pro.phase_profile}
		For a given parabolic trajectory $x = f(z) = az^2 + bz + c (a\neq 0)$, the phase profile $\phi(x)$ across the aperture that satisfies (\ref{eq.function_need_to solve}) is given by	
		\begin{equation}
			\label{eq.phase_profile}
			\vspace*{-2pt}
		 \frac{k_f}{4a}\left[\operatorname{arcsinh}\left(\psi(x)\right)+
			(2b - \psi(x))\sqrt{\psi^2(x) + 1}
			\right],
		\end{equation}
		where $\psi(x) =2a\sqrt{\frac{c-x}{a}} + b$, valid when $a(c-x)\ge 0$.
	\end{proposition}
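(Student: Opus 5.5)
The plan is to first make explicit the relation between an aperture point $x$ and the point of the trajectory that its ray touches, and then to integrate (\ref{eq.function_need_to solve}) by changing variables from $x$ to the slope $\psi$.

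\emph{Step 1: caustic geometry.} In the caustic construction, the ray leaving aperture point $(x,0)$ is tangent to the trajectory $x=f(z)$ at some point $(f(z_0),z_0)$. The slope $df/dz$ appearing in (\ref{eq.function_need_to solve}) is evaluated at this tangency point $z_0=z_0(x)$. The tangent line at $z_0$ is $x' = f(z_0)+f'(z_0)(z-z_0)$. Setting $z=0$ gives
\begin{equation*}
x = f(z_0) - z_0 f'(z_0) = az_0^2+bz_0+c-(2az_0+b)z_0 = c-az_0^2 .
\end{equation*}
Hence $z_0=\sqrt{(c-x)/a}$, which is real exactly when $a(c-x)\ge 0$. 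This is the source of the stated validity condition; we take the positive root, corresponding to propagation into $z>0$. The slope at the tangency point is then $f'(z_0)=2az_0+b=\psi(x)$. Therefore (\ref{eq.function_need_to solve}) becomes
\begin{equation*}
\frac{d\phi}{dx}=k_f\frac{\psi(x)}{\sqrt{1+\psi^2(x)}} .
\end{equation*}

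\emph{Step 2: change of variables.} Inverting the definition of $\psi$ gives $x=c-(\psi-b)^2/(4a)$, so that $dx=-\frac{\psi-b}{2a}\,d\psi$. Substituting,
\begin{equation*}
\phi=-\frac{k_f}{2a}\int \frac{\psi^2-b\psi}{\sqrt{1+\psi^2}}\,d\psi .
\end{equation*}
This integral uses two standard antiderivatives:
\begin{equation*}
\int \frac{\psi^2}{\sqrt{1+\psi^2}}\,d\psi=\tfrac12\left(\psi\sqrt{1+\psi^2}-\operatorname{arcsinh}\psi\right),\qquad
\int \frac{\psi}{\sqrt{1+\psi^2}}\,d\psi=\sqrt{1+\psi^2}.
\end{equation*}
Combining them gives $\phi=\frac{k_f}{4a}\left[\operatorname{arcsinh}\psi+(2b-\psi)\sqrt{\psi^2+1}\right]$ up to an irrelevant additive constant, which is (\ref{eq.phase_profile}).

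\emph{Step 3: direct verification.} As a check, I would differentiate (\ref{eq.phase_profile}) with respect to $x$ by the chain rule, using $d\psi/dx=-2a/(\psi-b)$. The bracket has $\psi$-derivative
\begin{equation*}
\frac{1}{\sqrt{1+\psi^2}}-\sqrt{1+\psi^2}+\frac{(2b-\psi)\psi}{\sqrt{1+\psi^2}}=\frac{2\psi(b-\psi)}{\sqrt{1+\psi^2}} .
\end{equation*}
Multiplying by $\frac{k_f}{4a}\cdot\frac{-2a}{\psi-b}$ recovers $k_f\psi/\sqrt{1+\psi^2}$, which confirms the result.

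\emph{Main obstacle.} The main obstacle is Step 1: justifying that the slope in (\ref{eq.function_need_to solve}) is evaluated at the tangency point determined by $x$. This is what links the $z$-dependence to $x$, and it also forces the choice of square-root branch. The change of variables also assumes $\psi\neq b$, i.e.\ $x\neq c$; this endpoint can be handled by continuity.
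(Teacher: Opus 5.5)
Your proof is correct and follows the paper's Appendix B: you use the same tangent-line construction, giving $x=c-az_0^2$, $z_0=\sqrt{(c-x)/a}$ and slope $\psi(x)$, and then integrate the resulting equation for $d\phi/dx$. The paper only says that integral is done ``by parts''; you carry it out explicitly with the substitution $x=c-(\psi-b)^2/(4a)$ and standard antiderivatives, and your differentiation check with $d\psi/dx=-2a/(\psi-b)$ correctly recovers $k_f\psi/\sqrt{1+\psi^2}$.
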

	\begin{IEEEproof}
		See Appendix B.
	\end{IEEEproof}

	 Based on the phase profile derived from Proposition~\ref{pro.phase_profile} that generates the parabolic trajectory, we implement holographic Airy beamforming by applying this profile to the RHS amplitude control model in (\ref{eq.RHS_model}) using the conventional holographic beamforming method. 
	Specifically, we define the phase difference at the $n$-th RHS element between the phase profile of parabolic beam $\phi((n-1)d)$ given by (\ref{eq.phase_profile}) and the reference wave phase shift $-\mathbf{k}_s\cdot\mathbf{r}_{n}^s$ given by  (\ref{eq.RHS_model}) as 	$\Delta\Phi_n =\phi((n-1)d) + \mathbf{k}_s\cdot\mathbf{r}_{n}^s $.
	 The radiation power of each element is designed to be inversely proportional to this phase difference, ensuring that a small amount of energy is radiated when $\Delta\Phi_n$ is large and a large amount is radiated when it is small. The function $\frac{\cos{(\Delta\Phi_n)} + 1}{2}$ naturally fulfills this amplitude control requirement while also respects the tuning range of the element's amplitude  $m_n$. Therefore, the target bending beam is synthesized via this amplitude modulation. The corresponding radiated field at the 
	 $n$-th element is
	 \begin{equation}
	 	\label{eq.rhs_radiated_field}
	M_{n} = \sqrt{\eta_{eq}}\cdot\frac{\cos{(\Delta\Phi_n)}+1}{2}\cdot s_{n}\cdot\sqrt{P_t}\cdot e^{-j\mathbf{k}_s\cdot \mathbf{r}_{n}},
\end{equation}
where $s_n = 1$ when $a[c-(n-1)d]\ge0$,and $s_n = 0$ otherwise.

		


		\subsection{EM Wave Propagation Model under Blockage}
				
		Based on the initial radiated electric field given by (\ref{eq.rhs_radiated_field}) at the $z=0$ plane, we employ an iterative angular spectrum method (ASM) rooted in Rayleigh-Sommerfeld (RS) theory to compute the field at the user's location~\cite{ref.airy_codebook}.  According to RS theory, the field at the plane $z$ results from the superposition of spherical waves emanating from every point on the wavefront at the preceding plane $z' = z - \delta_z$, with each point acting as a secondary source. Denoting $E_z(x)$ as the electric field at the plane $z$, we have
		\begin{equation}
			\label{eq.RS_propagates}
			E_z(x) = \int E_{z'}(x')\left[ \frac{\delta_z e^{-jk_fr}}{2\pi r^2}(jk_f + \frac{1}{r})\right]dx',
		\end{equation}
		where $r = \sqrt{(x-x')^2 + \delta_z^2}$ is the distance between two points on the two successive propagation planes.
		
	Benefiting from the convolution form in (\ref{eq.RS_propagates}), the ASM applies a Fourier transform to convert the integral into a multiplication to simplify the computation of the field across propagation planes. The Fourier transform of the field is given by $\mathcal{F}\{	E_z(x)\} = \mathcal{F}\{E_{z'}(x)\}	\times H_{\delta_z}$, where $H_{\delta_z} = \mathcal{F}\{\frac{1}{2\pi}\frac{\partial}{\partial z'}(e^{-jk_fr}/r)\}$ is the transfer function dependent on the propagation geometry and wavenumber. By applying the inverse Fourier transform, the field at the $z$-plane is then given by $E_z(x) = \mathcal{F}^{-1}\{\mathcal{F}\{E_{z-\delta_z}(x)\}	\times H_{\delta_z}\}$.	To incorporate the effect of obstacles, we introduce the blockage parameter $B_z(x)$. For a point $(x,z)$,  $B_z(x) = 1$  indicates free space, while $B_z(x) = \alpha \in [0,1)$ represents the attenuation factor when the point is blocked. Thus, the electric field satisfies
		\begin{equation}
			\label{eq:ASM_blockage}
			E_z(x) = B_z(x)\mathcal{F}^{-1}\{\mathcal{F}\{E_{z-\delta_z}(x)\}	\times H_{\delta_z}\}  .
		\end{equation}
		
		We define the operator $(\cdot,\cdot)\otimes$ via  $(B_z(x),H_{\delta_z})\otimes E_{z-\delta_z}(x) = B_z(x)\mathcal{F}^{-1}\{\mathcal{F}\{E_{z-\delta_z}(x)\}	\times H_{\delta_z}\}$, describing a single ASM propagation step given in (\ref{eq:ASM_blockage}). By dividing the path from the RHS at the $z = 0$ plane to the user at the $z = z_r$ plane  into $S$ propagation planes with spacing $\delta_z$,  the electric field at the $z = z_r$ plane can be calculated iteratively as
		\begin{equation}
			\label{eq.E_rev}
				E_{z_r}(x) = (B_{z_r}(x), H_{\delta_z}) \otimes \cdots  \otimes(B_{\delta_z}(x), H_{\delta_z}) \otimes E_0(x).
		\end{equation}

Accordingly, the electric field $E_r$ at the user is  given by 
		\begin{equation} 
			E_r = E_{z_r}(x)|_{x = x_r} = \mathbf{H}\mathbf{M}|_{x = x_r} ,
		\end{equation}
		where  $\mathbf{H} = \prod_{s = 1}^{S}(B_{s\delta_z}(x),H_{\delta_z}) \otimes$ is the  equivalent channel encompassing the blockages, and the beamforming vector $\mathbf{M}$ is the initial electric field $E_0(x)$ in which  $E_0((n-1)d)$ is the radiated field of the $n$-th  RHS given by $M_n$ in (\ref{eq.rhs_radiated_field}). It follows that the received power at the user can be obtained by
		\begin{equation}\label{eq.rev_power}
			P_r =\frac{A_e\cdot|E_r|^2}{Z_0} = \frac{A_e\cdot |\mathbf{H}\mathbf{M}|_{x = x_r}^2}{Z_0},
		\end{equation}
		where $Z_0$ is the impedance of free space and $A_e$ is the effective aperture of the receiving antenna.

	
		\section{Key Idea of Adjustable effective Aperture in Holographic Airy Beamforming}
		
		
		%
		 In this section, we compare the constraints on curve parameters when using RHS versus phase-controlled arrays as transmitting antennas and demonstrate how RHS allows for more flexible curve parameter design.  The examples are subsequently provided to show the advantage of the RHS's adjustable effective aperture in generating curved beams.
		
			
		\subsection{Effective Aperture Adjustment for Trajectory Parameter Flexibility}
			\begin{figure}[t]
			\setlength{\abovecaptionskip}{-4pt}
			\centering
			\includegraphics[width=0.7\columnwidth]{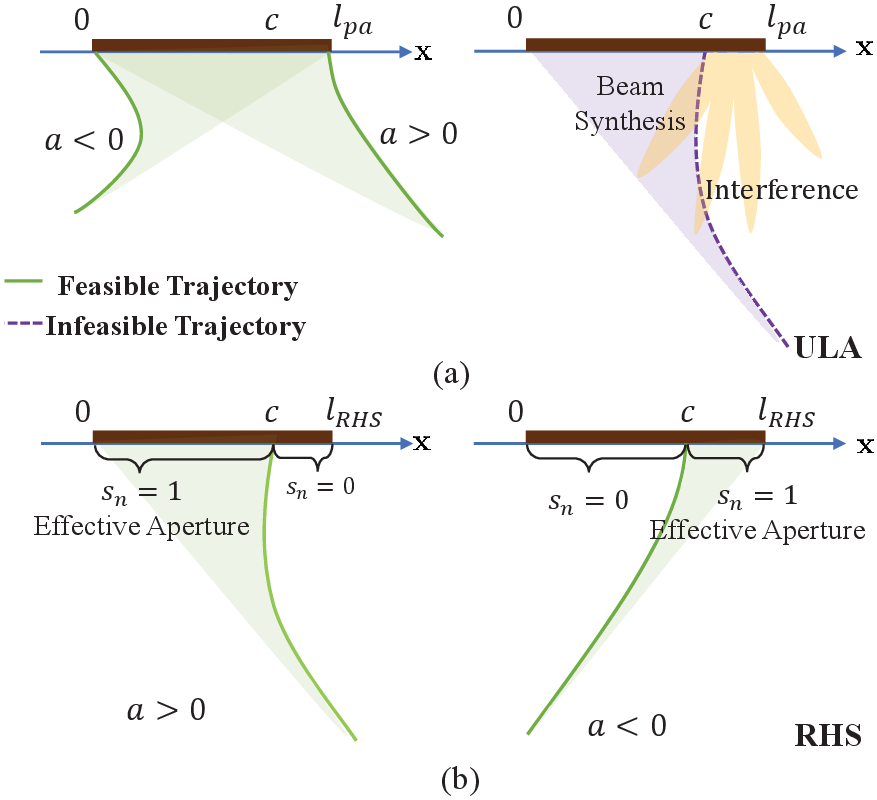}
			\caption{Illustration of trajectory design constraints for: (a) ULA; (b) RHS.}
			\label{fig.ill_aperture}
			\vspace{-10pt}
		\end{figure}

		Given a fully blocked LoS path as described in the system model, we aim to generate a curved beam that follows a parabolic trajectory (i.e., $x = az^2 + bz + c$) to circumvent the obstacle. In the case of a phase-controlled uniform linear array (ULA) with a fixed aperture of $[0, l_{pa}]$  employed as the transmitter,  the beamforming vector can be expressed as $\frac{\sqrt{P_t}}{\sqrt{N_{pa}}}[e^{j\phi(0)},\cdots,e^{\phi(l_{pa})}]^{T}$,which is derived from the phase profile $\phi(x)$ given in (\ref{eq.phase_profile}).	According to Proposition~\ref{pro.phase_profile}, a valid phase solution at an aperture point $x$ exists only if $a(c-x)\ge0$. Consequently, the entire ULA aperture can follow this profile to generate the curved beam only when the offset 
	$c$ satisfies $c>l_{pa}$ for $a>0$, or $c<0$ for $a<0$.  If the desired trajectory has an offset $c$ lying inside the aperture, as illustrated in Fig.~\ref{fig.ill_aperture}(a) for the case of $a>0$, only the aperture of $[0,c]$ can generate the desired curved beam using the phase profile from (\ref{eq.phase_profile}). The remaining aperture $[c,l_{pa}]$ would, however, radiate field that causes interference and distorts the beam synthesis. Therefore, for a ULA, the offset $c$ cannot lie within the aperture interval~$[0,l_{pa}]$.

		
		
		 
		 
		When the RHS with an aperture of $[0, l_{RHS}]$ serves as the transmitter, its amplitude-modulation capability allows individual elements to be  deactivated, i.e., the radiating activation state $s_n$ modeled by (\ref{eq.RHS_model}) is set to 0.  Accordingly, the effective aperture constituted by radiating elements (i.e., $s_n =1$)  can be dynamically adjusted within $[0, l_{RHS}]$. This flexibility allows the generation of a curved beam even when the trajectory offset $c$ lies inside the aperture interval.  Specifically, as shown in Fig.~\ref{fig.ill_aperture}(b), the effective radiating aperture is adjusted to $[0,c]$	for  curvature $a>0$, and $[c,l_{RHS}]$ for curvature $a<0$.
		 
		 Moreover, the RHS elements do not perform amplitude modulation within a fixed range but rather adjust the radiation ratio. Therefore, varying the effective aperture does not change the total radiated power. When the aperture is reduced, each activated element operates with a higher radiation ratio, while when the aperture is enlarged, the radiation ratio per element decreases. By utilizing amplitude control via radiation ratio,  the RHS can dynamically adjust its effective aperture to support holographic Airy beamforming while maintaining a constant total radiated energy to ensure consistent communication performance.

		 
		
		\begin{figure}[t]
			\centering
			\includegraphics[width=0.9\columnwidth]{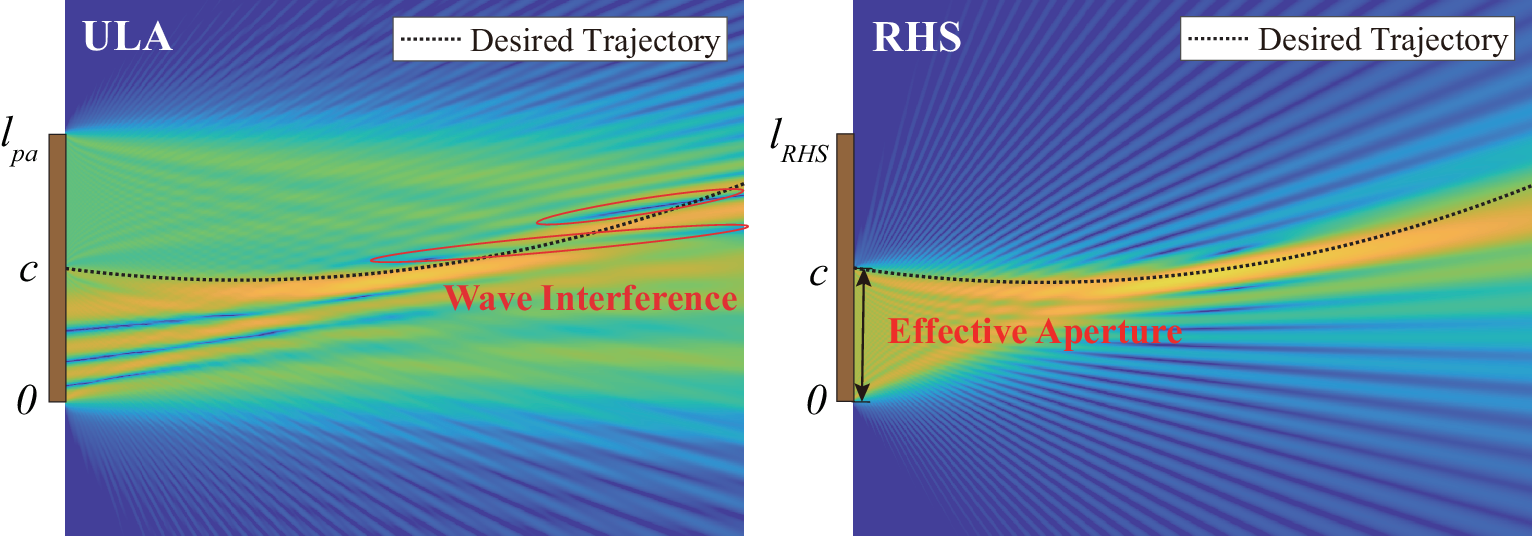}
			\caption{ Comparison of normalized power distribution between RHS and ULA for trajectory offsets $c$ within the aperture }
			\label{fig.aperture2}
			\vspace*{-10pt}
		\end{figure}
		\begin{figure}[t]
			\setlength{\abovecaptionskip}{-4pt}
			\centering
			\includegraphics[width=\columnwidth]{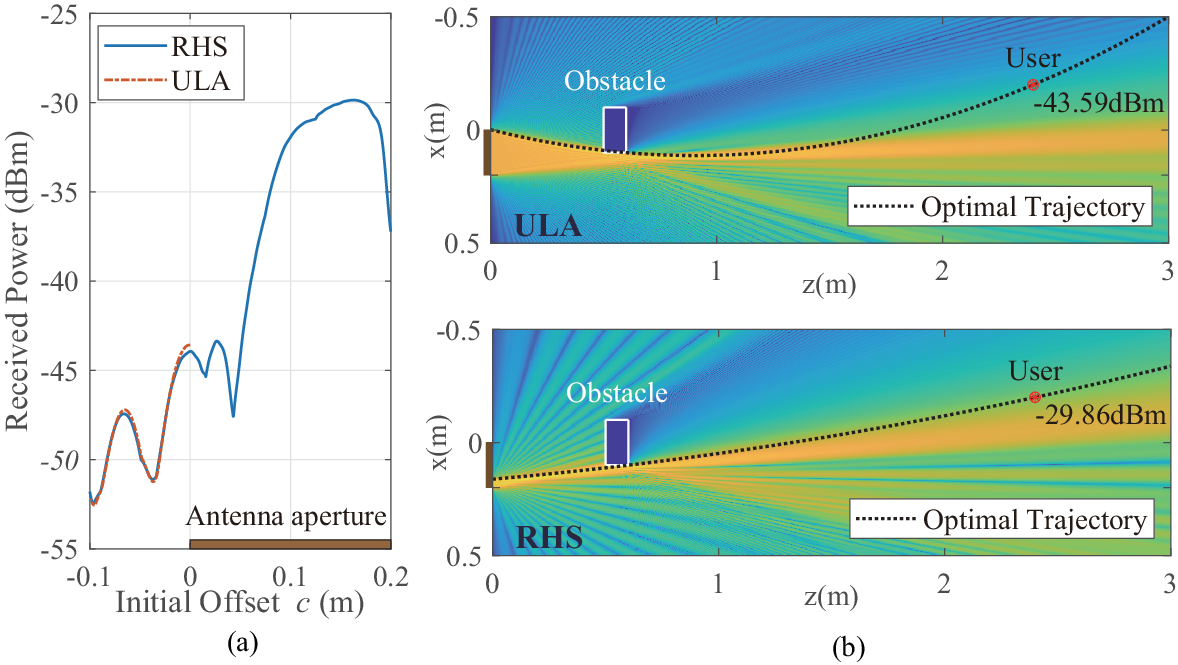}
			\caption{Comparison between RHS and ULA: (a) Power received at the user vs. the initial offset $c$ of the trajectory  (b) Power distribution of the optimal curved beam}
			\label{fig.aperture}
			\vspace*{-10pt}
		\end{figure}


		
		
		\subsection{Illustrative Examples}
		
		Fig.~\ref{fig.aperture2} illustrates the power distribution of the radiated field for both the ULA and the RHS when the offset $c$ of the desired trajectory lies within the aperture.  For the ULA, while the aperture $[0,c]$ can in principle generate the desired curved trajectory, its radiation destructively interferes with that from $[c,l_{pa}]$, preventing the formation of a desired curved beam. In contrast, the RHS adjusts its effective aperture to $[0,c]$, which enables the synthesis of a curved beam propagating along the desired trajectory. Consequently, the RHS allows the optimization of the trajectory offset $c$ within its aperture.
		
		Fig.~\ref{fig.aperture} compares the performance of the RHS and the ULA in generating curved beams to circumvent an obstacle.	The line-of-sight (LoS) path between the user located at $(-0.2, 2.4)$~m and the antenna aperture $[0, 0.2]$~m is blocked by an obstacle spanning $[-0.1, 0.1]$~m $\times [0.5, 0.6]$~m. As shown in Fig.\ref{fig.aperture}(a), when the offset $c$ of the parabolic trajectory lies outside the aperture,  both the RHS and the ULA can generate a curved beam that delivers comparable received energy to the user. However, when the offset $c$ lies within the aperture interval, only the RHS can generate a valid curved beam due to its capability for dynamic effective aperture adjustment. This provides greater flexibility in beam trajectory design and enhances the user's received power. Fig.\ref{fig.aperture}(b) shows the power distribution along the trajectory that maximizes the received power, as obtained from Fig.\ref{fig.aperture}(a).	It demonstrates that by supporting beam trajectories with $c$ inside the aperture, the RHS achieves a smaller curvature than the ULA, thus enabling farther energy propagation and a gain of over 10 dB in received power.



		\section{Parabolic Trajectory Design in the Fully Blocked Scenario}
		
		In this section, we formulate the trajectory optimization problem for maximizing the received power at the user. An efficient geometry-based algorithm is then proposed, which reduces the search dimension through geometric constraints and leverages analytical insights to estimate and refine the optimal parameter.
		\subsection{Problem Formulation}
		In the blocked RHS-aided communication system, we aim optimize the trajectory  $x = f(z) = az^2 + bz+c$ of the curved beam to maximize the user's received power, which can be formulated as
		\begin{align}
			\label{problem}
			\max_{a,b,c} \quad & P_r =\frac{A_e}{Z_0}|E_r|^2 = \frac{A_e}{Z_0}|\mathbf{H}\mathbf{M}|_{x = x_r}^2 \\
			\text{s.t.} \quad &
			\begin{cases} 
				c \le l_{RHS}, & a>0 \\
				c \ge 0, & a<0
			\end{cases}\tag{\ref{problem}a}.
		\end{align}
	
	However, the optimization of the trajectory parameters $a$, $b$, and $c$ involves a substantial search space within a three-dimensional domain. To narrow the feasible region, we impose a constraint that the curved trajectory is supposed to pass through both the user's location  $(x_r, z_r)$ and the obstacle point $(x_o, z_o)$ that requires circumvention. This constraint effectively reduces the dimension of the optimization problem by one, such that only the offset $c$ requires direct optimization. For any given value of $c$, the corresponding  values for $ a$ and $b$ can be derived~as
\begin{equation}\label{eq.ab_c}
	\begin{bmatrix} a \\ b \end{bmatrix}
	=
	\begin{bmatrix}
		z_r^2 & z_r \\
		z_o^2 & z_o
	\end{bmatrix}^{-1}
	\begin{bmatrix} x_r - c \\ x_o - c \end{bmatrix}.
\end{equation}

Therefore, the received power maximization problem (\ref{problem}) can be simplified as
\begin{equation}
	\label{prob.simple}
	\max_{c} \quad   P_r \quad s.t. \quad (\ref{problem}a),(\ref{eq.ab_c})
\end{equation}

		\subsection{Trajectory Optimization}
	
		Since the equivalent channel $\mathbf{H}$ in the obstructed environment  needs to be calculated through a layered iterative procedure, a direct numerical search for the optimal $c$ in (\ref{prob.simple}) is computationally prohibitive. To enable efficient optimization,  we next estimate the optimal value of $c$, denoted as $c_{est}$, by leveraging geometric insights of the trajectory. Serving as a reference,  $c_{est}$ confines the subsequent search to a promising region for  determining the optimal $c_{opt}$.

	 As illustrated in Fig.~\ref{fig.algorithm}, the curved beam is formed by a caustic consisting of the tangent rays of the trajectory. The maximum propagation distance $z_{max}$  over which the beam energy remains concentrated along the trajectory is determined by the farthest tangent ray that can be generated by the antenna aperture, yielding
		\begin{equation}\label{eq.zmax}
			z_{max} =
			\begin{cases}
				\sqrt{c/a}, & a>0 \\
				\sqrt{(c - l_{RHS})/a}, & a<0
			\end{cases}.
		\end{equation}

		\begin{figure}[t]
			\centering
			\includegraphics[width=0.7\columnwidth]{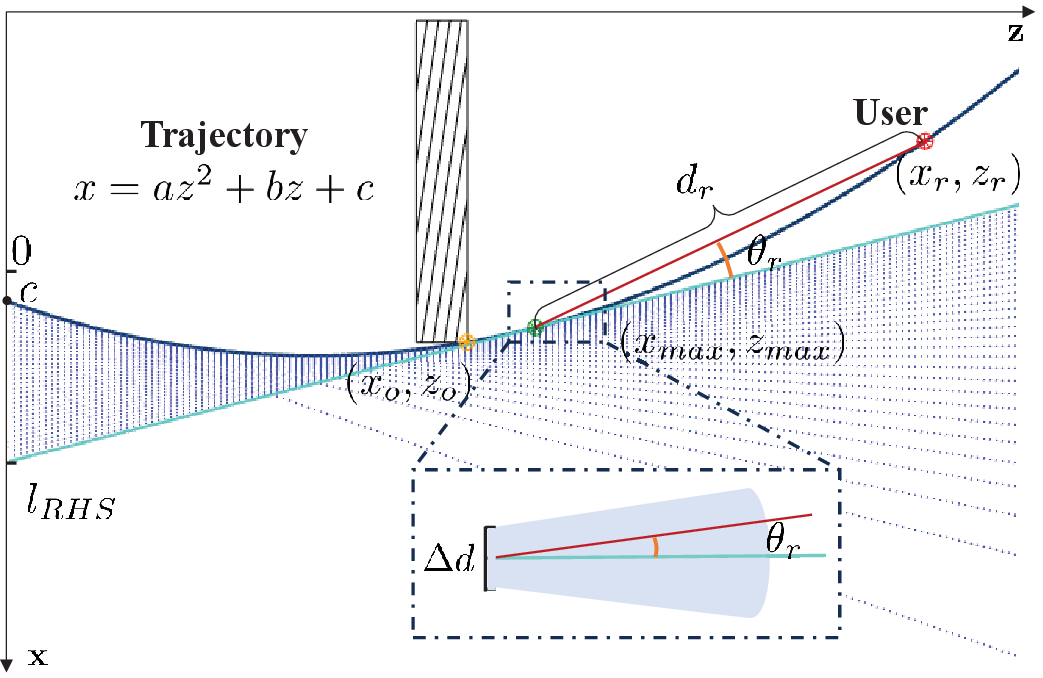}
			\caption{Schematic of the geometric structure of a parabolic trajectory beam in obstructed environment }
			\label{fig.algorithm}
		\vspace{-10pt}
		\end{figure}
		
		\begin{remark}\label{remark~1}
			
		
		In the fully blocked scenario, the obstruction of the LoS path prevents the aperture from generating a tangent ray of the obstacle-detouring curved trajectory that passes through the user's location $(x_r,z_r)$. Hence, the maximum propagation distance  $z_{max}$	is always less than $z_r$, placing the user within the energy-decay region where diffraction dominates.
		\end{remark}

      
      To estimate the user's received power in the energy-decay region, we consider the user is located at the far-field region of the diffraction Gaussian beam generated at  $(x_{max} = f(z_{max}),z_{max})$ with a waist size $\Delta d$. According to the Friis formula, the approximated received power at the user is proportional to 
      \begin{equation}
      	\label{eq.approx}
      	(\frac{\lambda}{4\pi d_r})^2\cdot e^{-\frac{2\pi^2\Delta d^2}{\lambda^2}\theta_r^2},
      	\end{equation}
      where $d_r = \sqrt{(x_r - x_{max})^2 + (z_r - z_{max})} $ is the distance between the user and the point $(x_{max},z_{max})$, $\lambda$ is the wavelength, $e^{-\frac{2\pi^2\Delta d^2}{\lambda^2}\theta_r^2}$ is the radiation pattern of the Gaussian beam~\cite{ref.gaosibeam}, and $\theta_r$ is the angle of the user relative to the tangent to the trajectory at $z = z_{max}$, given~by
      \begin{equation}
      	\label{eq.theta_r}
      	\theta_r = \arctan{\left(\left| \frac{x_r - x_{max} - (z_r - z_{max})(2az_{max} + b)}{z_r - z_{max} +(2az_{max} + b) (x_r - x_{max})} \right|\right)}.
      \end{equation}
       
       From (\ref{eq.ab_c}), (\ref{eq.zmax}), and (\ref{eq.theta_r}), it can be seen that both $d_r$ and $r_{\theta}$ can be expressed as functions of $c$. Consequently, the expression (\ref{eq.approx}), which approximates the power at the user's location, is an  explicit function of $c$, denoted as $\Lambda(c)$. Let $c_{est} = \arg \max_{c} \Lambda(c)$ serve as the starting point of searching for the optimal $c_{opt}$. The search then identifies $c_{opt}$ as the nearest point to $c_{est}$ where a local maximum is achieved. The detailed algorithm is summarized in Algorithm~1. With the optimized offset $c_{opt}$ obtained by Algorithm~1, the holographic Airy beamformer $\mathbf{M}$ can be derived from (\ref{eq.rhs_radiated_field}) and ({\ref{eq.ab_c}).
       
    
    \begin{algorithm}[t]
    	\caption{Geometry-based Trajectory Optimization}
    	\label{algorithm1}
    	\KwIn{User's location $(x_r,z_r)$; obstacle point $(x_o,z_o)$}
    	\textbf{Initialize:}Solve $c_{est} = \arg \max_{c} \Lambda(c)$; Compute $P_{est}$ by (\ref{eq.ab_c}), (\ref{eq.rhs_radiated_field}) and (\ref{eq.rev_power}) \\
    	 Set $P_{1}^{*} = P_{1} = P_{est}$, $c_{1}^{*} = c_{1} = c_{est}$\\
    	\While{$P_{1}\ge P_{1}^{*}$}{
    		$P_{1}^{*}\leftarrow  P_1$,	$c_{1}^{*}\leftarrow  c_1$,	$c_1  \leftarrow c_1 - \delta_c$,\\
    		Compute $a_1,b_1$ by (\ref{eq.ab_c})\\
    		\uIf{constraint (\ref{problem}a) is not satisfied}{
    			break
    		}
    		\Else{
    			Compute $P_1$ by (\ref{eq.rhs_radiated_field}) and (\ref{eq.rev_power})
    		}
    	}
    		
    	Set $P_{2}^{*} = P_{2} = P_{est}$, $c_{2}^{*} = c_{2} = c_{est}$\\
    	\While{$P_{2}\ge P_{2}^{*}$}{
    			$P_{2}^{*}\leftarrow I_2$,	$c_{2}^{*}\leftarrow c_2$, $c_2  \leftarrow c_2 + \delta_c$,\\
    		 Compute $a_2,b_2$ by (\ref{eq.ab_c}) \\
    		\uIf{constraint (\ref{problem}a)  is not satisfied}{
    			break
    		}
    		\Else{
    			Compute $P_2$ by (\ref{eq.rhs_radiated_field}) and (\ref{eq.rev_power})
    		}
    	}
    $P_{opt} = \max{ \{P_1^{*},P_2^{*}\}}$
 
    	\KwOut{The optimal  recieved power $P_{opt}$ }
    \end{algorithm}
    
    \vspace*{-2pt}
		
		\section{Simulation Results}

			\begin{figure*}[ht]
			\begin{minipage}[b]{.32\linewidth}
				\setlength{\belowcaptionskip}{-15mm}
				\centering
				\includegraphics[width=0.99\linewidth]{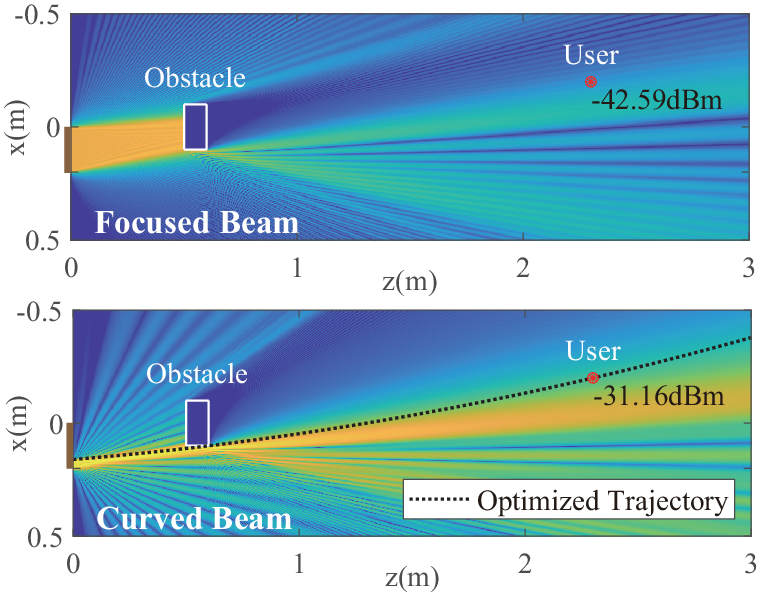}
				\caption{Power distribution of the optimized curved beam and the focused beam for $z_r = 2.3$ }
				\label{fig.cmp_curve_focused}
			\end{minipage}
			\begin{minipage}[b]{.3\linewidth}
			   \setlength{\belowcaptionskip}{-15mm}
					\centering
				\includegraphics[width=0.99\linewidth]{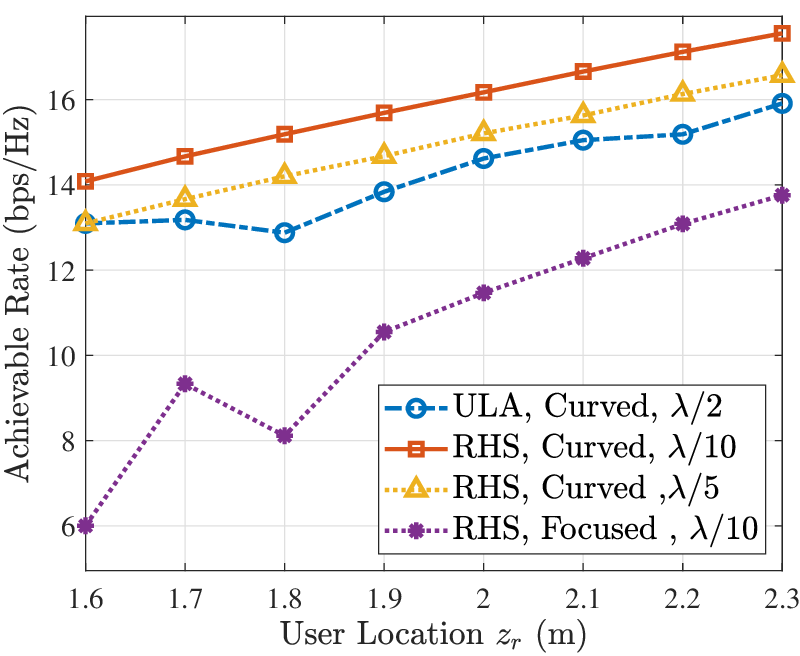}
				\caption{Achievable rate vs. the user's location $z_r$ for $x_r=-0.2$~m }
				\label{fig.simu_rate}
			\end{minipage}
			\begin{minipage}[b]{.34\linewidth}
				\setlength{\belowcaptionskip}{-15mm}
				\centering
				\includegraphics[width=0.99\linewidth]{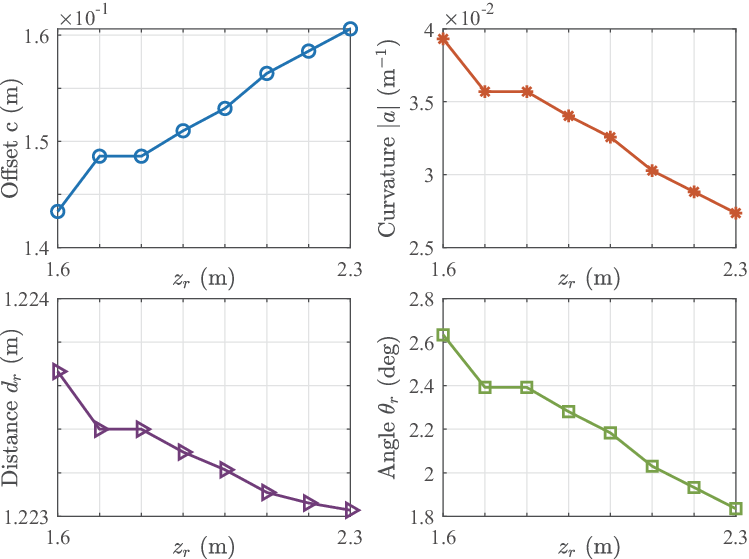}
				\caption{Optimized trajectory parameters vs. the user's location $z_r$}
				\label{fig.last}
			\end{minipage}
		\end{figure*} 
		
		In this section, we evaluate the performance of the proposed holographic Airy beamforming method and the geometry-based trajectory optimization algorithm. The carrier frequency is set to 100 GHz. The RHS aperture spans from $0$ m to $0.2$ m with an element spacing of $\lambda/10$. An obstacle is located within the region $[-0.1, 0.1]$~m $\times [0.5, 0.6]$~m. The user moves along the line $x_r = -0.2$~m, with $z_r$ varying from $1.6$~m to $2.3$~m. Throughout this movement, the LoS path between the user and the antenna remains completely blocked.
		
	
	Fig.~\ref{fig.cmp_curve_focused} illustrates the power distribution of the focused beam and the curved beam generated by Algorithm 1 when the user is located at $(-0.2,2.4)$~m. It can be observed that when the focused beam is used, the transmitted power, which is concentrated along the LoS path, is blocked by the obstacle, thus placing the user within a shadow zone. In contrast, when the curved beam is employed, the majority of the power bypasses the obstacle along the optimized trajectory, improving the received signal power at the user.


Fig.~\ref{fig.simu_rate} shows the achievable rate as the user moves from $z_r = 1.6$~m to $z_r = 2.3$~m. The curved beam can achieve a higher rate than the focused beam, as it enhances the user's received power, which is shown in~Fig.~\ref{fig.cmp_curve_focused}. Moreover, RHS utilizes an adjustable effective aperture to achieve flexible trajectory optimization, resulting in a higher rate than ULAs.  Besides, it is seen that the smaller element spacing of the RHS can achieve higher achievable rates due to higher-resolution wavefront engineering.


Fig.~\ref{fig.last} depicts the optimized trajectory parameters, offset $c$ and curvature $a$, along with the distance $d_r$ from the trajectory's maximum propagation point $(x_{max},z_{max})$ to the user and the angle $\theta_r$ between the user-$(x_{max},z_{max})$ line and the tangent, during the user's movement. The results indicate that  as the user moves from near to far, the trajectory offset $c$ varies within the RHS aperture, generating a trajectory with reduced curvature $a$. This leads to a gradual decrease in both the distance $d_r$ and the angle $\theta_r$, which in turn improves the user’s received power, as evidenced by the enhanced achievable rate presented in Fig.~\ref{fig.simu_rate}. This further corroborates the validity of the approximated received power given by (\ref{eq.approx}), which increases with the decline in both $d_r$ and $\theta_r$.

		\section{conclusion}
		In this paper, we have considered a downlink RHS-enabled communication system where the BS equipped with an RHS transmit to the user fully blocked by an obstacle. To  circumvent blockage, we have proposed  the holographic airy beamforming scheme to generate a curved beam following a parabolic trajectory. A geometry-based trajectory optimization algorithm has been developed to maximize the received power of the user.

		Three conclusions can be drawn from the simulation results. \emph{First}, the curved beam generated via holographic Airy beamforming enhances the received power for users blocked by obstacles compared to conventional focused beamforming. \emph{Second}, owing to its adjustable effective aperture, the RHS offers greater flexibility in trajectory design and achieves a transmission performance over 10 dB superior to that of the phase-controlled arrays.
		\emph{Third}, a decrease in element spacing for an RHS facilitates high-resolution wavefront engineering and precise beamforming, leading to higher user received power. 
		\appendices
		\section{}
	    We demonstrate that the radiation models given by (\ref{eq.rad_element}) and (\ref{eq.RHS_model}) are mutually convertible.  When model given  by (\ref{eq.rad_element}) is used to represent the radiation field, we  define $m_n^{'} = \sqrt{\prod_{k=1}^{n-1}(1 - \eta_k)\eta_n}$. By setting $m_n =m_n^{'} /\max{(m_n^{'} )}$, $s_n = 1$ and $\eta_{eq} = 1/\sum{m_n^2}$, we obtain the representation in model given by (\ref{eq.rad_element}).  When model given  by (\ref{eq.rad_element}) is used, we define $\eta_1 = m_1^2s_1\eta_{eq}$ and, for $n\ge2$,  $\eta_n = m_n^2 s_n\eta_{eq}/(1 - \eta_{eq}\sum_{k=1}^{n-1}m_k^2s_k)$. This enables the radiation field of the RHS to be written as the model given by (\ref{eq.rad_element}).
	    
	    \section{}
	    To solve (\ref{eq.function_need_to solve}), the key idea is to link the aperture coordinate $x$ to a point on the parabolic curve by using the tangent line geometry. For a point $(x_0,z_0)$ on the parabola, the intersection of its tangent line with the input plane at $z = 0$ determines the originating aperture point as $x = x_0 - z_0\frac{df(z_0)}{dz_0} = (az_0^2 + bz_0 + c) - z_0(2az_0 + b) = -az_0^2 + c$. Considering forward propagation where $z_0\ge0$, the aperture coordinate should satisfy $a(c-x)\ge 0$, which leads to the relation  $z_0 = \sqrt{\frac{c-x}{a}}$. Therefore,the slope at this corresponding curve point can be expressed as  a function of $x$, defined as $\psi(x)=\frac{df(z_0)}{dz_0} = 2a\sqrt{\frac{c-x}{a}} + b$. Substituting this expression for $\psi(x)$ into equation (\ref{eq.function_need_to solve}) transforms it into a differential equation solely in terms of $x$, and its solution for the phase profile, given in equation (\ref{eq.phase_profile}), is subsequently obtained through integration by parts.


		\bibliographystyle{IEEEtran}
		\bibliography{IEEEabrv.bib,reference.bib}

		\end{document}